\documentclass[%
 reprint,
superscriptaddress,
 amsmath,amssymb,
 aps,
]{revtex4-2}

\usepackage{graphicx}
\usepackage{dcolumn}
\usepackage{bm}
\usepackage{hyperref}
\usepackage{orcidlink}
\usepackage[ruled,vlined]{algorithm2e}
\usepackage{import}
\usepackage{transparent}
\usepackage{cleveref}
\usepackage{amsthm}
\usepackage{amsmath}
\usepackage{braket}
\usepackage{subfig}
\usepackage{geometry}
\usepackage{adjustbox}
\usepackage{booktabs}
\usepackage{float}

\definecolor{brandblue}{HTML}{1F77B4}
\definecolor{brandbrown}{HTML}{8C564B}
\definecolor{brandteal}{HTML}{4C9085}
\definecolor{brandorange}{HTML}{FF7F0E}
\definecolor{brandpurple}{HTML}{9467bd}
\definecolor{brandgreen}{HTML}{2ca25f}

\definecolor{lightcoral}{RGB}{240, 128, 128}
\definecolor{custompink}{RGB}{232,129,229}
\definecolor{customcyan}{HTML}{47ced0}
\definecolor{customgreen}{RGB}{80,205,60}

\newcommand{\orcid}[1]{\href{https://orcid.org/#1}{\includegraphics[width=0.9em]{orcid.pdf}}}

\newtheorem{theorem}{Theorem}
\newtheorem*{restatement}{Restatement of Theorem~\ref{thm:detailed_balance_mcmc}}

\begin{document}

\preprint{APS/123-QED}

\title{Revisiting the Quantum-Guided Cluster Algorithm: Improvements and Numerical Experiments}

\author{Peter J. Eder\,\orcidlink{0009-0006-3244-875X}}
\email{peter-josef.eder@tum.de}
\affiliation{Technical University of Munich (TUM), Garching, Germany}
\affiliation{Siemens AG, Germany}

\author{Sarah Braun\,\orcidlink{0000-0002-7032-6116}}
\email{sarah.braun@siemens.com}
\affiliation{Siemens AG, Germany}

\begin{abstract}
We study correlation-guided cluster algorithms for solving the \textsc{Max-Cut} problem that iteratively try to improve solutions by updating clusters of nodes. Building on the recently proposed quantum-guided cluster algorithm (QGCA)~\cite{Eder2025-Cluster}, which leverages precomputed two-point correlations to guide collective updates, we extend the cluster construction by incorporating next-nearest-neighbor (NNN) information. We evaluate this extension across different correlation sources on random regular graphs and non-degenerate tile-planted instances. Notably, we observe particularly strong performance on non-degenerate instances and provide a scaling analysis for this class. Finally, we outline an extension toward a correlation-guided Markov-chain Monte Carlo algorithm, whose detailed analysis remains an open direction for future work.
\end{abstract}

\maketitle

\section{Introduction}
\label{sec:introduction}

Exploring Ising spin glasses and related combinatorial optimization problems, such as \textsc{Max-Cut}, remains computationally challenging due to their highly rugged energy landscapes induced by disorder and frustration. Classical approaches based on Monte Carlo methods~\cite{Metropolis1953,Hastings1970,Geman1984}, such as simulated annealing (SA), provide versatile frameworks for sampling and optimization, but often struggle to efficiently reach low-energy states in such complex landscapes.

Cluster algorithms (CAs), such as those introduced by Swendsen and Wang~\cite{Swendsen1987} and Wolff~\cite{Wolff1989}, accelerate convergence in ferromagnetic systems by performing collective spin updates. Extensions to spin glasses, including Houdayer's algorithm~\cite{Houdayer2001}, have demonstrated improvements in certain settings. However, their effectiveness is typically limited for generic spin-glass instances due to percolation effects, where clusters grow to system-spanning size and thereby lose their ability to induce meaningful transitions~\cite{Radicchi2015,Kessler1990}.

To address these limitations, we recently introduced the quantum-guided cluster algorithm (QGCA)~\cite{Eder2025-Cluster}, which leverages precomputed two-point correlations, obtained from classical or quantum methods, to guide cluster formation. By incorporating such information, the algorithm enables collective updates that reflect the underlying structure of low-energy configurations, thereby facilitating more effective exploration of the configuration space. We also successfully applied the QGCA as a post-processing method for the Dial-a-Ride Problem with time windows in a hybrid quantum-classical framework~\cite{Eder2026}.

In this work, we extend the QGCA and perform a more detailed analysis. Our contributions are as follows:
\begin{itemize}
    \item We extend the QGCA by incorporating next-nearest neighbor (NNN) information into the cluster-building procedure and analyze its performance across different correlation sources on random regular \textsc{Max-Cut} instances, leading to improved performance.
    
    \item We study non-degenerate tile-planted Chook instances and show that the algorithm exhibits particularly strong performance on this class, together with a scaling analysis.
    
    \item As an outlook, we present an extension of the method to a correlation-guided Markov-chain Monte Carlo (MCMC) algorithm, whose detailed numerical analysis is left for future work.
\end{itemize}

\section{The \textsc{Max-Cut} Problem \& Ising Spin Glasses}

The \textsc{Max-Cut} problem is a fundamental NP-hard combinatorial optimization problem~\cite{Karp1972}. Even for moderately sized and dense graphs, determining optimal solutions can become computationally prohibitive for classical algorithms~\cite{Charfreitag2022}. As its decision version is NP complete, \textsc{Max-Cut} serves as a canonical formulation for a wide range of NP-hard problems, including applications in areas such as scheduling and logistics~\cite{Lucas2014}.

Formally, a weighted \textsc{Max-Cut} instance is defined on an undirected graph $G = (V, E)$ with vertex set $V$ of size $n$ and edge set $E$. Each edge $\{i,j\} \in E$ is associated with a weight $A_{ij} \in \mathbb{R}$. The objective is to partition the vertices into two disjoint subsets such that the total weight of edges connecting the two subsets is maximized.

This objective can be expressed as
\begin{equation}
C(x) = \frac{1}{2} \sum_{\{i,j\} \in E} A_{ij} \left(1 - x_i x_j\right),
\label{eq:max_cut_objective_function}
\end{equation}
where $x_i \in \{-1,1\}$ encodes the assignment of vertex $i$ to one of the two subsets.

The problem admits an equivalent formulation in terms of Ising spin glasses, with Hamiltonian
\begin{equation}
H(x) = - \sum_{\{i,j\} \in E} J_{ij} x_i x_j,
\label{eq:spin_glass}
\end{equation}
where the couplings $J_{ij}$ represent interaction strengths between spins. Minimizing $H$ is equivalent to maximizing~\eqref{eq:max_cut_objective_function}.

\section{Algorithmic Design}
\label{sec:algorithmic_design}

\subsection{Nearest-Neighbor Variant}

We begin by briefly recalling the nearest-neighbor variant of the algorithm introduced in~\cite{Eder2025-Cluster}, before extending it to include NNNs in the subsequent section. The proposed CA follows the general structure of SA, but replaces single-spin updates with collective flips of spin clusters. The full procedure is given in Algorithm~\ref{alg:qcga_linear_schedule}. We refer to this method as a \emph{correlation-guided cluster algorithm}, or \emph{quantum-guided} when quantum-derived correlations are used.

\begin{algorithm}[htbp]
\caption{Correlation-Guided CA, similar to \cite[Algorithm~1]{Eder2025-Cluster}. \textcopyright~2025 IEEE.}
\label{alg:qcga_linear_schedule}
\SetAlgoLined
\KwIn{Hamiltonian $H$, correlation matrix $Z$, final inverse temperature $\beta_f$, number of iterations $i_{\text{max}}$, and probability scaling factor $\ell_{\text{scale}}$.}
\KwOut{Best energy $E_{\text{opt}}$ \& corresponding state vector $x$.}

$\beta \leftarrow 0$\;
Initialize random spin configuration $x$\;
$E_{\text{opt}} \leftarrow H(x)$\;
\While{$\beta < \beta_f$}{
    $i \leftarrow$ Uniform$(1, n)$; \tcp*[f]{Choose seed node}\\
    $\mathcal{C} \gets$ \texttt{CreateCluster}$(i, x, H, Z, \ell_{\text{scale}})$\;
    $x' \gets$ Flip$(x, \mathcal{C})$; \tcp*[f]{Flip cluster}\\
    $\Delta E \leftarrow H(x') - H(x)$\;
    \If{$\Delta E \leq 0$}{
        $x \leftarrow x'$\;
        \If{$H(x) < E_{\mathrm{opt}}$}{
            $E_{\text{opt}} \leftarrow H(x)$\;
        }
    }
    \ElseIf{$\mathrm{Uniform}(0, 1) < e^{-\beta \Delta E}$}{
        $x \leftarrow x'$; \tcp*[f]{Metropolis criterion}
    }
    $\Delta \beta \leftarrow \beta_f/(i_{\text{max}} - 1) |C|$\;
    $\beta \leftarrow \beta + \Delta \beta$; \tcp*[f]{Reduce temperature}\\
}
\end{algorithm}

The algorithm operates on a precomputed correlation matrix $Z$, which is evaluated once prior to the optimization. This matrix determines how clusters are formed and thereby guides the exploration of the configuration space. In particular, two-point correlations $Z_{i\neq j} \in [-1,1]$ quantify whether spins $i$ and $j$ are likely to be aligned (values close to $1$) or anti-aligned (values close to $-1$) in a \textsc{Max-Cut} solution. The framework is agnostic to the origin of these correlations and can incorporate both classical and quantum sources.

Starting from a random initial configuration $x$, each iteration selects a seed node uniformly at random and constructs a cluster $\mathcal{C}$ based on the correlations encoded in $Z$. The cluster is then flipped, and the resulting energy change $\Delta E$ is computed. The update is accepted according to the Metropolis rule
\begin{equation}
    p_{\text{acc}}(x \to x') = \min\!\left(1, e^{-\beta \Delta E}\right),
    \label{eq:metropolis_criterion}
\end{equation}
while the inverse temperature $\beta$ is gradually increased until a final value $\beta_f$ is reached.

The annealing schedule is defined such that $\beta_f$ is attained after $i_{\text{max}}$ effective single-spin updates, treating a cluster flip of size $k$ as equivalent to $k$ individual updates. This ensures comparability with SA independent of implementation details. Unlike SA, where the acceptance probability is evaluated for every spin flip, the present approach performs this evaluation only once per cluster update, which slightly penalizes the method.

Cluster construction is carried out by the procedure \texttt{CreateCluster()}, which uses the correlation matrix $Z$ to counteract percolation effects that typically limit CAs in spin-glass systems~\cite{Kessler1990}. Beginning from the seed node, neighboring nodes are processed in random order and added to the cluster with probability
\begin{equation}
p^{(i,j)}_{\text{link}}(x) := \min\left(1,\max\left(0, \frac{\ell_{\text{scale}}}{\ell_{\text{perc}}} x_i x_j Z_{ij}\right)\right),
\label{eq:p_link}
\end{equation}
where $\ell_{\text{scale}}>0$ is a tunable parameter and
\begin{equation}
\ell_{\text{perc}} := \frac{\langle d \rangle}{2 \mathbb{E}[|Z_{i\neq j}| \mid Z_{i\neq j} \neq 0] \cdot (\langle d^2 \rangle - \langle d \rangle)}
\end{equation}
provides an estimate of the percolation threshold. Here, $\langle d \rangle$ and $\langle d^2 \rangle$ denote the first and second moments of the degree distribution, while $\mathbb{E}[|Z_{ij}| \mid Z_{ij} \neq 0]$ captures the average magnitude of non-zero correlations. This expression is based on~\cite{Radicchi2015} and adjusted to account for the scale of the correlations. The cluster building procedure for the extension of NNNs introduced in the following section is depicted in \Cref{fig:cluster_building}.

\begin{figure*}[htbp]
  \centering
  \adjustbox{max width=\textwidth, scale=1.1}{%
    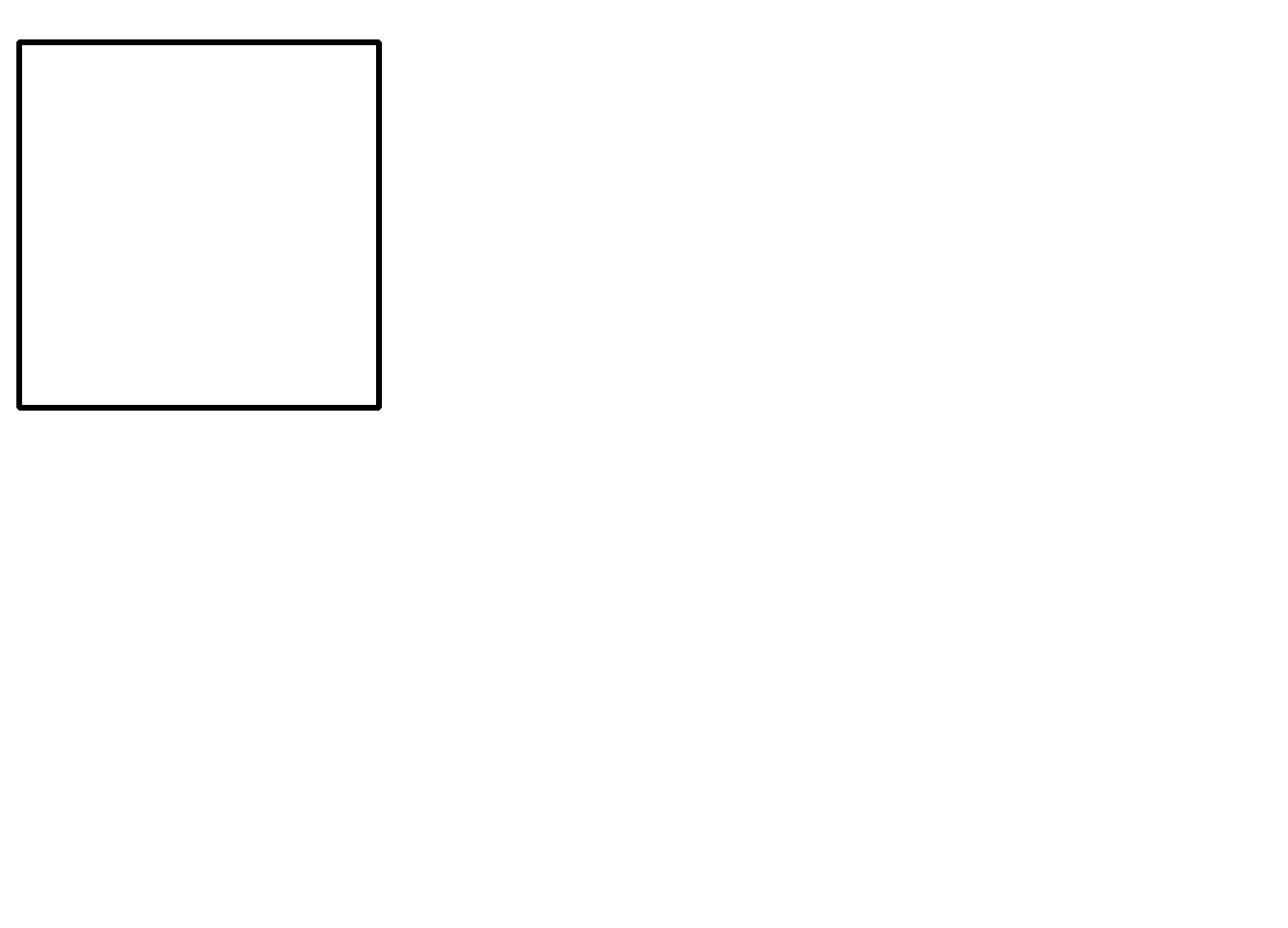%
  }
  \caption{Illustration of the cluster-building process, characterizing the pseudocode function \texttt{CreateCluster()} for NNNs, similar to \cite[Figure~3]{Eder2025-Cluster}. Starting from a randomly selected seed node (here, $i=1$), neighboring vertices are iteratively considered for inclusion in the cluster with probability $\Tilde{p}_{\text{link}}^{(i,j)}$ from \Cref{eq:p_link_nnn}, determined by the correlation matrix $Z$. Accepted nodes undergo a shrinking step before further expansion, while rejected nodes have their edge to the cluster removed. \textcopyright~2025 IEEE.}
  \label{fig:cluster_building}
\end{figure*}

During the construction process, nodes are either rejected, removing the corresponding edge, or accepted and merged into the cluster via a shrinking step, forming a supernode. For details on the shrinking step, see, for example, ~\cite{Bravyi2020,Fischer2024}. The procedure terminates once no additional neighbors are left. Note that these modifications on the graph are reset after each cluster construction.

\clearpage

\subsection{Extension to Next-Nearest Neighbors}
\label{sec:extension_to_next_nearest_neighbors}

To further enhance convergence, we extend the cluster construction to include \emph{next-nearest neighbors} (NNN). In this variant, the link probability incorporates not only the direct correlation between spins $i$ and $j$, but also contributions from the neighbors of $j$. The modified link probability is defined as
\begin{align}
\tilde{p}_{\text{link}}^{(i,j)}(x)
&:= \min\Bigg\{1,\ 
    \max\Bigg[0,\ 
        e\frac{\ell_{\text{scale}}}{\ell_{\text{perc}}}
        x_i x_j Z_{ij} \nonumber\\
&\qquad - (1-e)\frac{\ell_{\text{scale}}}{\ell_{\text{perc}}
        }\sum_{k \in \mathcal{N}(j)} x_j x_k Z_{jk}
\Bigg]
\Bigg\},
\label{eq:p_link_nnn}
\end{align}
where $e \in [0,1]$ controls the relative contribution of direct neighbor and NNN terms, and $\mathcal{N}(j)$ denotes the set of neighbors of node $j$. This extension introduces an additional parameter that must be tuned, but simultaneously incorporates more structural information into the cluster-building process, which can improve the ability of the algorithm to escape local minima.

Including NNNs also affects the annealing schedule. Instead of scaling the inverse temperature increment solely with the cluster size, we account for the total number $K$ of nodes considered during cluster construction, including both accepted nodes and rejected neighboring candidates. The update rule becomes
\begin{equation}
    \Delta \beta = \frac{\beta_f}{i_{\text{max}}-1} \, K,
\end{equation}
which maintains consistency with SA. The original scaling can be recovered by replacing $K$ with the cluster size $|\mathcal{C}|$, thereby ensuring a consistent comparison in computational cost through an appropriate implementation choice. This variant is described in Appendix~\ref{apx:alternative_cost_calculations_for_nnn} but is not used in the numerical results for simplicity.

\section{Numerical Experiments with Next-Nearest Neighbors}
\label{sec:numerical_experiments_with_next_nearest_neighbors}

In the proposed correlation-guided CA, clusters are constructed using precomputed information derived from various sources, including coupling constants, semidefinite programming (SDP), thermal Monte Carlo sampling, and the Quantum Approximate Optimization Algorithm (QAOA). These correlation types and their computation have been described in detail in~\cite{Eder2025-Cluster} and are therefore not repeated here.

Furthermore, in~\cite{Eder2025-Cluster}, we benchmarked the correlation-guided CA on random regular \textsc{Max-Cut} instances of varying degrees, comparing it to SA and analyzing its performance when guided by the correlations mentioned above. The results demonstrated that the effectiveness of the method strongly depends on the quality of the correlations and the level of frustration, with improved correlations leading to significant performance gains, especially for more frustrated graphs. In this section, we analyze the same correlation types, while considering NNN interactions.

In all experiments, edge weights are drawn uniformly at random from $\{-1,1\}$, except for the Chook instances discussed in \Cref{sec:non_degenerate_chook_problems}. The final inverse temperature is set to $\beta_f = 8$ for both the correlation-guided CA and SA. Instances of size $n=28$ and $3$-regular graphs with $n=100$ are solved to optimality using Gurobi~\cite{gurobi}, whereas $20$- and $40$-regular graphs of size $n=100$ are solved using SA with $10{,}000n$ iterations and $50$ repetitions. Accordingly, approximation ratios $\bar{C}_{\text{app}}$ are computed with respect to optimal solutions for the smaller and $3$-regular instances, and with respect to the best solutions found for the $20$- and $40$-regular graphs.

\subsection{Coupling Constants, Semidefinite Programming \& Monte Carlo Correlations}
\label{sec:coupling_constants_semidefinite_programming_monte_carlo_correlations}

We begin by analyzing the CA guided by classical correlations. \Cref{fig:MC_NNN} shows the percentage of optimal solutions found as a function of the number of iterations. The hyperparameters are determined via a grid search over selected values of $\ell_{\text{scale}}$ within $[1,1000]$ and $e \in \{0.5,0.6,\dots,1.0\}$, where each configuration is evaluated using 50 mutations with $50n$ iterations, and the best-performing setting is selected for each data point. The graph ensembles for degrees three and 20 are the same as the ones used in~\cite{Eder2025-Cluster}, and we additionally consider ten random regular graphs of degree 40 in \Cref{fig:MC_degree_40_combo_NNN} to probe more strongly frustrated instances.

\begin{figure*}[htbp]
    \centering
    \subfloat[3-regular (left) and 20-regular (right) degrees]{%
    \input{MC_degrees_3_20_NNN.pgf}%
        \label{fig:MC_degrees_3_20_NNN}%
    }\par
    \subfloat[40-regular (left) with its frustration graph (right)]{%
    \input{MC_degree_40_combo_NNN.pgf}%
        \label{fig:MC_degree_40_combo_NNN}%
    }
    \caption{The plots in (a) and the left panel of (b) show the percentage of optimal solutions found as a function of the number of iterations for the CA considering next-nearest neighbors (NNN) guided by various correlations, as well as nearest neighbors (NN) guided by coupling constants (CCs). In (a), the same graph instances as in~\cite{Eder2025-Cluster} are used, while in (b) ten random $40$-regular graphs of size $n=100$ are considered. Together, these subfigures illustrate the performance of the CA guided by CCs, semidefinite programming (SDP) correlations, and thermal correlations sampled from Monte Carlo (MC) simulations at different inverse temperatures $\beta_s$ for NNN interactions. The right panel of (b) shows the corresponding $Z_{ij}^{\text{MC}}$ values for edges with $J_{ij}=1$, obtained from MC sampling and grouped into bins for degree 40 at the same inverse temperatures as in the left plot.}
    \label{fig:MC_NNN}
\end{figure*}

The results show that incorporating NNNs improves performance over the nearest-neighbor variant when using coupling constants. Moreover, the version guided by SDP correlations outperforms the coupling-constant-based variant, with comparable performance only for degree-three graphs. Thermal correlations further improve performance for sufficiently high sample quality, with stronger effects for higher-degree graphs. In particular, for degree-three graphs, improvements occur only for near-optimal samples, whereas for larger degrees even moderately accurate correlations provide benefits. Overall, these findings indicate that the advantage of more informative correlations increases with frustration.

The frustration plot (i.e., $Z_{ij}^{\text{MC}}$ for $J_{ij}=1$) for the degree-40 graphs is shown on the right-hand side of \Cref{fig:MC_degree_40_combo_NNN} and is obtained by binning the correlation values from MC sampling. In the absence of frustration, these correlations cluster near $Z_{ij}=1$, while a broader distribution extending towards negative values reflects competing interactions and thus higher frustration. The observed spread confirms the strong frustration of these instances and shows that improved thermal samples capture more informative structure. Consistently, thermal correlations outperform SDP correlations at comparable solution qualities; for example, $(\beta_s, \Bar{C}_{\text{app}}^{\text{MC}}) = (0.3, 0.93)$ for degree-20 graphs and $(\beta_s, \Bar{C}_{\text{app}}^{\text{MC}}) = (0.25, 0.93)$ for degree-40 graphs yield better performance than the corresponding SDP correlations. The quality of the SDP correlations is given by the average approximation ratio $\Bar{C}_{\text{app}}^{\text{GW}}$ when applying the Goemans-Williamson (GW) rounding procedure 1024 times.

\subsection{Comparison to Nearest Neighbors}
\label{sec:comparison_to_nearest_neighbors}

In this section, we compare the performance of the CA considering NN and NNN interactions. \Cref{fig:NNN_vs_NN} shows the percentage of optimal solutions found when guided by correlations derived from QAOA at different depths. The experiments are performed on ten graph instances with regular degree 27 and $n=28$ nodes, using 50 mutations with $50n$ iterations each, and the pre-optimization follows the same procedure as in the previous section. 

\begin{figure*}[htbp]
    \centering
    \subfloat[Checking the link probability with CCs before]{%
    \input{NNN_vs_NN_check.pgf}%
        \label{fig:NNN_vs_NN_check}%
    }
    \subfloat[Not checking the link probability with CCs before]{%
    \input{NNN_vs_NN_no_check.pgf}%
        \label{fig:NNN_vs_NN_no_check}%
    }
    \caption{The percentage of optimal solutions found is shown for the CA with nearest neighbor (NN) and next-nearest neighbor (NNN) interactions, guided by QAOA correlations at different depths and coupling constants (CCs). The graphs have regular degree 27 with $n=28$ nodes; 50 mutations with $50n$ iterations are performed. In (a), the link probability $\Tilde{p}_{\text{link}}$ using CCs is evaluated at each iteration when adding a node to the cluster: updates with $\Tilde{p}_{\text{link}}<0$ are rejected, updates with $\Tilde{p}_{\text{link}}=0$ are accepted, and correlations are considered only otherwise. In (b), this evaluation using CCs is omitted.}
    \label{fig:NNN_vs_NN}
\end{figure*}

In (a), the link probability $\Tilde{p}_{\text{link}}$ from \Cref{eq:p_link_nnn} using coupling constants (i.e., $Z_{ij}=J_{ij}$) is evaluated first at each iteration when adding a node to the cluster, even in the correlation-guided case. Note that for this specific evaluation, the maximum function from \Cref{eq:p_link_nnn} is not applied such that the link probability take values smaller than zero. If it is smaller than zero, the update is rejected, while updates with $\Tilde{p}_{\text{link}}=0$ are always accepted; only when it is larger than zero are the correlations taken into account. In this setting, the performance improves monotonically with increasing QAOA depth. In (b), this preliminary check is omitted, and the performance can decrease with increasing depth. This behavior can be attributed to degeneracies that mix correlations of different ground states and thereby reduce their reliability. Including coupling constants incorporates additional information, which mitigates the effect of such unreliable correlations. 

Consequently, we use the additional coupling-constant-based evaluation of $\Tilde{p}_{\text{link}}$ throughout the remainder of this work. Furthermore, the CA considering NNN interactions outperforms the CA considering nearest-neighbor interactions for all data points. Finally, note that SA finds the solution in $28.3\,\%$ of the cases.

We now analyze the percentage point differences in the number of optimal solutions found for graphs with $n=28$ and varying degrees, using the CA considering both NN and NNN interactions, guided by QAOA correlations at depth three and coupling constants, and compare the results to SA in \Cref{fig:NNN_vs_NN_deg}. For each data point, 50 mutations with $50n$ iterations are performed, using the same parameter grid as in the previous section, with the best-performing configuration selected in each case. 

The comparison is structured as follows: the left panel shows the percentage point differences between the QAOA-guided CA and the CA guided by coupling constants, the middle panel displays the individual performance differences for both methods, and the right panel compares the CA guided by coupling constants with SA. As the degree increases, and thus the level of frustration, improvements tend to grow, although this trend is not strictly monotonic, particularly for NN interactions in the left panel and in the comparison with SA. This behavior is attributed to small-graph effects. 

Across almost all instances, the CA considering NNN interactions performs at least as well as the version considering NN interactions, with the only exception being the blue curve at degree ten. We attribute that again to small-graph effects. Moreover, SA is consistently outperformed, even by the CA guided solely by coupling constants.

\begin{figure*}[htbp]
    \centering
    \input{NNN_vs_NN_deg.pgf}
    \caption{The percentage point differences in optimal solutions found are shown for $n=28$ for the CA considering nearest neighbors (NNs) and next-nearest neighbors (NNNs), using QAOA correlations at depth three, coupling constants (CCs), and for simulated annealing (SA). Each data point is obtained from 50 mutations with $50n$ iterations. The left panel presents the difference between the QAOA-guided CAs and the CC-guided CAs, the middle panel shows the individual performance differences for both approaches, and the right panel compares the CC-guided CAs with SA.}
    \label{fig:NNN_vs_NN_deg}
\end{figure*}

Finally, in \Cref{fig:NN_NNN_sweet_spots}, we investigate the performance of the CA considering NNN interactions on larger graphs of size $n=100$ using thermal samples and examine whether a regime exists in which the samples themselves do not yet contain an optimal solution, while the CA is nevertheless able to find it. To this end, we evaluate the time-to-solution (TTS), a standard performance metric for stochastic algorithms that accounts for both the runtime of a single run and the probability of success. Let $p_{\mathrm{succ}}$ denote the empirical probability that a single run finds a solution of the desired quality, and let $\bar{t}_{\mathrm{run}}$ be the average runtime per run. The number of repetitions required to achieve a success probability of at least $r \in (0,1)$ is then given by
\begin{equation}
N_r = \frac{\ln(1 - r)}{\ln(1 - p_{\mathrm{succ}})},
\end{equation}
and the corresponding time-to-solution is defined as
\begin{equation}
\mathrm{TTS}_r = \bar{t}_{\mathrm{run}} N_r.
\end{equation}
The resulting TTS values for $r = 99\,\%$ are shown in \Cref{fig:NNN_vs_NN_deg_MC_TTS} for degrees three, 20, and 40, together with the bar diagrams in \Cref{fig:samples_opts_found}, which indicate for each of the ten graph instances whether the samples contain at least one optimal solution. For example, for degree-20 graphs at $\beta_s = 0.4$, the samples contain an optimal solution for three instances.

\begin{figure*}[htbp]
    \centering
    \subfloat[$\text{TTS}_{0.99}$ for the CA guided by CCs, thermal correlations, and for SA]{%
    \input{NNN_vs_NN_deg_MC_TTS.pgf}%
        \label{fig:NNN_vs_NN_deg_MC_TTS}%
    }\par
    \subfloat[The number of optimal solutions present in the thermal samples]{%
    \input{samples_opts_found.pgf}%
        \label{fig:samples_opts_found}%
    }
    \caption{In (a), the times to solution (TTS) with successful run probability $r = 99\,\%$ are shown for the CA considering next-nearest neighbor (NNN) interactions guided by thermal correlations (MC) on graphs with $n = 100$ and degrees three, 20, and 40, together with the coupling-constant (CC) version of the CA and simulated annealing (SA). The values in brackets on the x-axis denote the mean approximation ratios of the samples used to compute the correlations guiding the CA. In (b), bar diagrams are shown that indicate, for each of the ten graph instances and inverse sample temperatures $\beta_s$, whether the thermal samples contain an optimal solution.}
    \label{fig:NN_NNN_sweet_spots}
\end{figure*}

As observed previously, for degree-three graphs the improvement of the CA employing thermal samples over SA is substantial, whereas achieving noticeable gains over the CA guided by coupling constants remains challenging. In particular, only a slight improvement, more pronounced for nearest-neighbor interactions, is observed once all optimal solutions are already present in the samples (i.e., $\beta_s=1.1$). In contrast, for higher degrees, SA begins to outperform the coupling-constant-guided CA, while improved samples become increasingly beneficial. For degree-20 graphs, slight improvements over the coupling-constant version appear already when not all optimal solutions are contained in the samples (i.e., $\beta_s \in \{0.3,0.4\}$), whereas for degree-40 graphs these improvements occur at lower values of $\beta_s \in \{0.2,0.3\}$. 

Overall, these results indicate that the CA performs well at low degrees when relying solely on graph topology, as encoded by the coupling constants, whereas for higher degrees more informative correlations become increasingly important and exhibit a regime of particularly strong performance. This behavior can be attributed to the increased frustration in higher-degree graphs.

Finally, lines of constant solution quality are shown in Appendix~\ref{apx:pareto_analysis}, where the QAOA depth is plotted against the number of iterations. These results further highlight the role of frustration in the QGCA, showing that for higher degrees (i.e., increased frustration), improved correlations lead to stronger performance gains.

\section{Non-Degenerate Chook Problems}
\label{sec:non_degenerate_chook_problems}

In this section, we study a class of Ising spin-glass instances with \emph{planted} ground states generated via \emph{tile planting} from the Chook library, where the full instance is constructed from local subproblems (tiles) with known optimal configurations that are combined consistently. We focus on square-lattice tile-planted problems with probability parameters $p_1=1$ and $p_2=p_3=0$ introduced in~\cite{Perera2020}, as this choice yields some of the most challenging instances within the Chook library for this topology~\cite{Perera2020-Hardness}. A defining property of this ensemble is that the ground state is ferromagnetic and non-degenerate up to the global $\mathbb{Z}_2$ symmetry, which, as shown below, leads to particularly strong performance of the CA.

\subsection{Preliminary Experiments}

The effectiveness of the CA, guided by coupling constants (considering NN and NNN interactions), SDP correlations (considering NNs), and random clusters (RCs), is shown in \Cref{fig:chook_normal} for ten graph instances of size $n=100$, together with results obtained using SA. Each data point is averaged over 50 repetitions per instance. For the coupling-constant and RC variants, we use $\ell_{\text{scale}}=1.0$ and for the SDP-guided algorithm we use $\ell_{\text{scale}}=7.5$, which yield the best performance. For the NNN version we use $e=0.7$. All CA variants, including RCs, outperform SA, while the version considering NNN interactions and coupling constants converges faster than the nearest-neighbor version. In particular, the SDP-guided CA clearly outperforms all other approaches and identifies all optimal solutions after only $11n$ iterations.

As shown in \Cref{fig:chook_scaling_SDP}, applying the Goemans-Williamson (GW) rounding procedure to the SDP correlations such that the time required for hyperplane rounding matches the solving time of the CA with SDP correlations yields at least one optimal solution for seven of the ten instances. However, as will be demonstrated later in \Cref{sec:scaling_analysis}, the CA guided by SDP correlations remains highly effective for larger graph instances, where the rounded solutions no longer reach optimality.

\begin{figure*}[htbp]
    \centering
    \subfloat[Optimal solutions found]{%
    \input{Chook_C_1_Normal.pgf}%
        \label{fig:Chook_C_1_Normal}%
    }
    \subfloat[Energy]{%
    \input{Chook_C_1_Normal_Energies.pgf}%
        \label{fig:Chook_C_1_Normal_Energies}%
    }
    \caption{In (a), the percentage of optimal solutions found is shown, while in (b) the absolute energy is plotted as a function of the number of iterations of the CA and simulated annealing (SA) for ten tile-planted graphs of size $n=100$. The CA is guided either by coupling constants (CCs), considering nearest-neighbor (NN) or next-nearest-neighbor (NNN) interactions, by correlations obtained from semidefinite programming (SDP) restricted to nearest-neighbor interactions, or by randomly generated clusters (RCs).}
    \label{fig:chook_normal}
\end{figure*}

\subsection{Thermal Correlations}
\label{sec:thermal_correlations}

Next, we analyze the performance of the CA on tile-planted graphs when guided by thermal correlations, in order to better understand its strong performance on spin glasses with unique ground states. In \Cref{fig:Chook_opts_C_1_MC}, the number of optimal solutions found is shown as a function of the number of iterations for the CA guided by thermal correlations obtained using the Metropolis-Hastings algorithm at different inverse sampling temperatures $\beta_s$. For comparison, results for the CA guided by coupling constants (considering NNs) and by RCs are also included. 

For the coupling-constant and RC variants, we use $\ell_{\text{scale}}=1.0$, which yields the best performance, while for the thermal-correlation-guided CA we consider $\ell_{\text{scale}}\rightarrow\infty$ (left) and $\ell_{\text{scale}}=2.0$ (right), referring to the former as the greedy variant. The quality of the samples is analyzed in Appendix~\ref{apx:analysis_of_thermal_samples_for_chook}. For one instance, an optimal solution is found for $\beta_s \ge 0.45$, and for another instance for $\beta_s \ge 0.5$, whereas for the remaining instances the samples do not contain any optimal solutions.

\begin{figure*}[htbp]
    \centering
    \subfloat[Performance of the CA]{%
    \input{Chook_C_1_opts_MC.pgf}%
        \label{fig:Chook_opts_C_1_MC}%
    }\par
    \subfloat[Distribution of the two-point correlation values]{%
    \input{Chook_C_1_frustration.pgf}%
        \label{fig:Chook_C_1_frustration}%
    }
    \caption{The percentage of optimal solutions found by the CA employing thermal correlations sampled using the Metropolis-Hastings (MH) algorithm is shown in (a) for $\ell_{\text{scale}}\rightarrow\infty$ (i.e., greedy) and $\ell_{\text{scale}}=2.0$, together with the coupling-constants (CCs) version of the algorithm, both considering only nearest neighbors (NNs). The CA using random clusters (RCs) is also included. The graphs are the same as those used in \Cref{fig:chook_normal}. In (b), the $Z_{ij}^{\text{MC}}$ values computed from Metropolis-Hastings sampling and grouped into bins are shown for edges with $J_{ij}<0$ (left) and $J_{ij}>0$ (right), for the same inverse temperatures as in (a).}
    \label{fig:Chook_C_1_MC}
\end{figure*}

We observe that the CA guided by thermal correlations performs significantly better in the greedy setting and already matches the performance of the coupling-constants variant at the highest sampling temperature considered, i.e., $(\beta_s,\bar{C}_{\text{app}}^{\text{MC}})=(0.1,0.62)$. Notably, for $(\beta_s,\bar{C}_{\text{app}}^{\text{MC}})=(0.6,0.96)$, all optimal solutions are found after only $2n$ iterations, despite the samples containing only two optimal solutions. 

This behavior can be understood from \Cref{fig:Chook_C_1_frustration}, where the two-point correlations $Z_{ij}^{\text{MC}}$ are binned separately for edges with $J_{ij}<0$ (left) and $J_{ij}>0$ (right). Since the ground state is unique and ferromagnetic, all correlations approach one in the limit $\beta_s \to \infty$. At the same time, correlations corresponding to $J_{ij}<0$ remain predominantly negative at high sampling temperatures and begin to increase around $\beta_s \approx 0.3$, which coincides with the point where the performance of the CA improves sharply as the sampling temperature is lowered. Around $\beta_s \approx 0.6$, the correlations are essentially correct, and in the greedy variant, where they are always accepted, the CA converges rapidly. In this regime, the system effectively behaves like a graph without frustrated spins. 

Importantly, this effect is not limited to spin glasses with ferromagnetic ground states but rather applies more generally to systems with non-degenerate ground states. A related result has been presented in~\cite{Eder2026} for the dial-a-ride problem, proving that the CA converges in expected $\mathcal{O}(n^3)$ steps when guided by optimal correlations. While access to optimal correlations would render post-processing unnecessary, this result nevertheless illustrates the potential strength of the method.

\subsection{Scaling Analysis}
\label{sec:scaling_analysis}

Finally, we analyze the scaling behavior of the CA on the Chook instances for increasing graph sizes. To this end, \Cref{fig:Chook_scaling} shows the TTS for $r=99\,\%$, measured both in CPU time (a) and in number of iterations (b). For each graph size $n$, ten Chook instances are considered. The parameter $\ell_{\text{scale}}$ is optimized over selected values between $1$ and $1000$, including $\infty$, and is ultimately set to $1.0$, $\infty$, $6.0$, and $1.0$ for coupling constants with NNs and NNNs, SDP, and RCs, respectively. For the NNN version, we choose $e=0.7$. 

For each $n$, the number of iterations is optimized with respect to the TTS over the set $\{2^l n \mid l\in\{5,6,\dots,14\}\}$. Each data point is based on 50 mutations per graph instance. The implementation is carried out in Julia~\cite{Julia2017}.

\begin{figure*}[htbp]
    \centering
    \subfloat[TTS in seconds]{%
    \input{Chook_scaling_TTS.pgf}%
        \label{fig:Chook_scaling_TTS}%
    }
    \subfloat[TTS in iterations]{%
    \input{Chook_scaling_TTS_sweeps.pgf}%
        \label{fig:Chook_scaling_TTS_sweeps}%
    }
    \caption{The scaling of the time-to-solution (TTS) for $r=99\,\%$ as a function of the graph size $n$ is shown in (a) CPU time and (b) number of iterations for the CA guided by coupling constants (CCs), considering nearest neighbors (NNs) and next-nearest neighbors (NNNs), SDP correlations restricted to nearest neighbors, and RCs, together with SA. In terms of CPU time, only the SDP-guided CA outperforms SA, whereas the coupling-constants variants also outperform SA when considering the scaling in the number of iterations.}
    \label{fig:Chook_scaling}
\end{figure*}

From the CPU-time scaling shown in (a), we observe that only the SDP-guided variant of the CA outperforms SA. In addition, the coupling constants with NNNs variant exhibits better scaling than the coupling constants with NN version, while the RC-guided CA shows the poorest scaling behavior. In contrast, the iteration-based scaling shown in (b), where implementation-specific overheads are negligible and only algorithmic scaling is considered, reveals that SA is also outperformed by the coupling-constants variants of the CA. 

The TTS in CPU time is fitted with the form $a n^b$, yielding exponents $b=3.13$ (SA), $4.4$ (coupling constants NN), $3.63$ (coupling constants NNN), $2.56$ (SDP NN), and $4.68$ (RCs). The first data point is excluded from the fits for both coupling-constants variants, and the final data point for RCs is omitted, as the maximum number of iterations was insufficient to properly optimize the TTS. This analysis highlights the potential strength of the presented CA; however, the time required to compute the SDP correlations is not included, as it significantly exceeds the runtime of the CA itself.

Finally, the quality of the SDP samples is analyzed in \Cref{fig:chook_scaling_SDP} by applying the GW rounding procedure such that the time required for hyperplane rounding matches the solving time of the CA with SDP correlations. In (a), the mean approximation ratios $\Bar{C}_{\text{app}}^{\text{GW}}$ are shown together with standard deviations and min-max ranges. In (b), the number of instances for which the rounded samples contain an optimal solution is reported. For example, for $n=100$, optimal solutions are obtained for seven out of ten instances. For $n\geq196$, no optimal solutions are found in the rounded samples, whereas the performance of the CA remains stable for larger graph sizes.

\begin{figure*}[htbp]
    \centering
    \subfloat[Approximation ratios]{%
    \input{Chook_scaling_SDP_capp_TTSbudget.pgf}%
        \label{fig:Chook_scaling_SDP_capp}%
    }
    \subfloat[Number of optimal solutions out of ten instances]{%
    \input{Chook_scaling_SDP_nums_opt_found_TTSbudget.pgf}%
        \label{fig:Chook_scaling_SDP_nums_opt_found}%
    }
    \caption{In this figure, the quality of the semidefinite-programming (SDP) samples is evaluated by applying the Goemans-Williamson (GW) rounding procedure such that the time required for hyperplane rounding matches the solving time of the CA with SDP correlations. Panel (a) shows the mean approximation ratios $\bar{C}_{\text{app}}^{\text{GW}}$ together with their standard deviations and min–max ranges. Panel (b) reports how often the rounded samples contain an optimal solution. For example, for $n=100$, optimal solutions are obtained for seven out of ten instances, whereas for $n \ge 196$ no optimal solutions are found.}
    \label{fig:chook_scaling_SDP}
\end{figure*}

\section{Quantum-Informed Markov-Chain Monte Carlo}
\label{sec:quantum_informed_markov_chain_monte_carlo}

Finally, as an outlook, we present a further extension of the CA by formulating it as a valid MCMC algorithm whose stationary distribution coincides with a target distribution~$\pi$, i.e., a stochastic sampling procedure that generates samples asymptotically distributed according to~$\pi$. In particular, we consider the Boltzmann distribution, $\pi(s) \propto \exp(-\beta E(s))$, thereby substantially extending the applicability of the informed CA. While the computation of correlations used to guide the CA may already identify optimal solutions in an optimization setting, this does not address the central goal of MCMC methods, namely sampling from the target distribution. Consequently, this extension is relevant for both NISQ-era and fault-tolerant settings, where combining tailored correlation structures with cluster updates can significantly enhance sampling performance.

To ensure validity, the MCMC algorithm must satisfy ergodicity (i.e., any state can be reached from any other via a sequence of updates) and detailed balance (transition probabilities reproduce the target distribution $\pi$ in equilibrium) with respect to~$\pi$. To address ergodicity, we modify the link probability in \Cref{eq:p_link} to
\begin{align}
    &p_{\text{link,MCMC}}^{(i,j)}
    = \nonumber \\ &\max\Bigg[0,\ 
        1- \exp \Bigg(-\frac{\ell_{\text{scale}}}{\ell_{\text{perc}}} 
        x_i x_j Z_{ij} \Bigg)
    \Bigg],
\label{eq:mcmc_p_link}
\end{align}
which guarantees a strictly non-zero probability of flipping a single spin, as the link probability remains strictly smaller than one. This ensures that the algorithm does not get trapped in a subset of configurations and can explore the full configuration space. Furthermore, we restrict the construction to NNs only, enabling an efficient implementation that satisfies detailed balance, which is explained in the following. The acceptance probability of a proposed move is then chosen according to the Metropolis-Hastings update rule
\begin{equation}
    p_{\text{acc}}(x \to x')
    \;=\;
    \min\!\left(
        1,\;
        \frac{\pi(x')\, q(x \mid x')}{\pi(x)\, q(x' \mid x)}
    \right),
    \label{eq:bac_detailed_balance_criterion}
\end{equation}
ensuring that transitions between states occur with the correct relative probabilities.

To evaluate the acceptance probability, the proposal probability ratio \( q(x' \mid x) / q(x \mid x') \) must be computed efficiently. This is achieved by modifying the cluster construction such that no shrinking step is performed when adding nodes; instead, nodes are added while leaving all other edges unchanged. Under this construction, the probabilities associated with adding cluster nodes and rejecting boundary nodes factorize, with the former canceling exactly in the proposal probability ratio, while the latter can be computed efficiently. As a result, the order of cluster construction does not affect the acceptance probability. The following theorem formalizes this result.

\begin{theorem}
Let $G=(V,\mathcal{E})$ be an undirected graph with $n$ nodes, and let $C_{x,x'}\subseteq V$ denote the set of vertices whose simultaneous flip maps a configuration $x\in\{-1,1\}^n$ to $x'$. Define
\[
\mathcal{E}_{x,x'} :=
\bigcup_{\substack{
v_i \in C_{x,x'} \\
v_j \notin C_{x,x'}
}}
\{v_i,v_j\}
\subseteq \mathcal{E}
\]
as the set of edges crossing the boundary of the cluster $C_{x,x'}$.  
The proposal distribution of \Cref{alg:mcmc}, restricted to NN interactions and without a shrinking step (i.e., when a node is added to the cluster, all edges remain untouched except those fully contained in the cluster), satisfies
\[
\frac{q(x'\mid x)}{q(x\mid x')}
=
\prod_{\{i,j\}\in \mathcal{E}_{x,x'}}
\frac{1-p^{(i,j)}_{\mathrm{link,MCMC}}(x)}
     {1-p^{(i,j)}_{\mathrm{link,MCMC}}(x')}.
\]
\label{thm:detailed_balance_mcmc}
\end{theorem}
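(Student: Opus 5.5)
We prove the ratio formula by writing $q(x'\mid x)$ as a sum over construction histories and pairing each history from $x$ with one from $x'$. The two matched histories should differ only in the factors attached to boundary edges.

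\emph{Setup.} Fix $C:=C_{x,x'}$. Without a shrinking step, the process is a randomized breadth/depth exploration from a uniformly chosen seed $i\in C$. At each step an unexamined edge $\{u,v\}$ with $u$ in the current cluster and $v$ outside is processed in random order. It is accepted with probability $p^{(u,v)}_{\mathrm{link,MCMC}}(x)$, and then $v$ joins. Otherwise it is rejected, the edge is marked, and it is never re-examined from that side. Note that $v$ may still join later through another edge, since no contraction merges edges.

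A history $h$ that produces exactly the cluster $C$ from $x$ then consists of:
\begin{itemize}
\item the seed;
\item an ordering;
\item a set $A(h)$ of accepted internal edges forming a spanning tree of $C$ rooted at the seed;
\item a set $R_{\mathrm{int}}(h)$ of rejected edges with both endpoints in $C$;
\item the rejection of every boundary edge in $\mathcal{E}_{x,x'}$.
\end{itemize}
Every boundary edge must be examined and rejected, because the process terminates only when no unexamined edge leaves the cluster. Hence
\[
\Pr_x(h)=\frac{1}{n}\,\omega(h)\prod_{e\in A(h)}p^{e}(x)\prod_{e\in R_{\mathrm{int}}(h)}\bigl(1-p^{e}(x)\bigr)\prod_{e\in\mathcal{E}_{x,x'}}\bigl(1-p^{e}(x)\bigr),
\]
where $\omega(h)$ collects the ordering probabilities. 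The key point is that $\omega(h)$ depends only on the graph and the sequence of choices, not on the spin values.

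\emph{Internal factors cancel.} For an edge $\{i,j\}$ with both endpoints in $C$, both spins flip, so $x'_ix'_j=x_ix_j$ and $p^{(i,j)}(x')=p^{(i,j)}(x)$. Consequently the same history $h$ is admissible from $x'$: the cluster flip maps $x'$ back to $x$ via the same $C$. It has identical seed, ordering, accepted and internal-rejected factors. The only difference is that the boundary product uses $x'$ instead of $x$.

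Summing over all histories, the common factor $\prod_{e\in\mathcal{E}_{x,x'}}(1-p^e(\cdot))$ pulls out of the sum. This gives
\[
q(x'\mid x)=S\prod_{e\in\mathcal{E}_{x,x'}}\bigl(1-p^e(x)\bigr),\qquad q(x\mid x')=S\prod_{e\in\mathcal{E}_{x,x'}}\bigl(1-p^e(x')\bigr),
\]
with the same $S=\sum_h \tfrac1n\,\omega(h)\prod_{A}p\prod_{R_{\mathrm{int}}}(1-p)$. Dividing yields the claim. The denominators are positive because the exponential form \eqref{eq:mcmc_p_link} keeps $p_{\mathrm{link,MCMC}}<1$.

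\emph{Main obstacle.} The delicate step is showing that $\omega(h)$, and the set of admissible histories, are the same from $x$ and from $x'$. This requires the random ordering of candidate edges to be drawn from the current cluster structure and graph alone, without looking at spins or link probabilities. It also requires each boundary edge to be examined exactly once, so that the rejection factor $(1-p^e)$ appears exactly once per crossing edge in every history, regardless of when the examination occurs. I would make this precise by fixing the convention of Algorithm~\ref{alg:mcmc}: each edge incident to the cluster is examined exactly once, in a random order depending only on the graph and on which nodes are already in the cluster. I would then verify by induction on the construction step that the stage-wise probabilities of the ordering choices coincide for $x$ and $x'$. Edges with $p^e=0$ are harmless, since they simply contribute a factor of $1$ to the rejection products.
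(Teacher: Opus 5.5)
Your proof is correct and follows the same route as the paper: $q(x'\mid x)$ splits into an internal factor that is invariant under the cluster flip, because $x'_ix'_j=x_ix_j$ on edges inside $C_{x,x'}$ (the paper calls it $P_I(x)$, you call it $S$). This multiplies $\prod_{\{i,j\}\in\mathcal{E}_{x,x'}}\bigl(1-p^{(i,j)}_{\mathrm{link,MCMC}}(x)\bigr)$, which comes from the forced rejection of every boundary edge, so the internal factor cancels in the ratio.

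Your history expansion is a more explicit version of the paper's brief factorization; it correctly tracks the internal rejected edges and ordering weights that the paper glosses over. The ordering issue you flag is actually harmless. Each edge is examined at most once, with a probability fixed by $x$, so one can pre-sample all edge coins. The resulting cluster is then just the seed's connected component among the accepted edges, independent of any ordering convention.
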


The proof is deferred to Appendix~\ref{apx:efficient_calculation_of_the_proposal_ratios_for_the_quantum_informed_mcmc_algorithm}. As a consequence of the factorization over boundary nodes, the acceptance probability can be evaluated efficiently. The corresponding cluster-building procedure is denoted by the function \texttt{CreateMcmcClusterNN} and is used in \Cref{alg:mcmc}, which provides the pseudocode of the quantum-informed MCMC algorithm based on cluster construction.

\begin{algorithm}[htpb]
\caption{Quantum-Informed MCMC.}
\label{alg:mcmc}
\SetAlgoLined
\KwIn{Hamiltonian $H$, correlation matrix $Z$, inverse temperature $\beta$, total sweeps $T$, link parameter $\ell_{\text{scale}}$.}
\KwOut{Sample list $\mathcal{S}$.}

Initialize $x$ uniformly at random;\\
Initialize empty sample list $\mathcal{S}$;\\
$t\leftarrow0$;\\
\While{$t<T$}{
$i\leftarrow \mathrm{Uniform}({1,n})$;\\
$\mathcal{C} \gets \texttt{CreateMcmcClusterNN}(i,x,H,Z,\ell_{\text{scale}})$;\\
$x' \gets \mathrm{Flip}(x, \mathcal{C})$;\\
$\Delta E \leftarrow H(x') - H(x)$;\\
$A \leftarrow \min\left(1, e^{-\beta \Delta E}\frac{q(x'\mid x)}{q(x \mid x')}\right)$;\\
Draw $u\sim \mathrm{Uniform}(0,1)$;\\
\If{$u < A$}{$x\leftarrow x'$;}
Append $x$ to $\mathcal{S}$;\\
$t\leftarrow t+|\mathcal{C}|$;
}
\Return $\mathcal{S}$;
\end{algorithm}

In the following, we present numerical results obtained with the quantum-informed MCMC algorithm. In addition to the steps given in \Cref{alg:mcmc}, we perform a single-spin flip with probability $35\,\%$ and a global spin flip (i.e., flipping all spins) with probability $5\,\%$, while cluster updates are applied otherwise. These additional updates improve ergodicity and enable efficient exploration of both sectors of the $\mathbb{Z}_2$-symmetric \textsc{Max-Cut} instances. As before, a single cluster flip is counted as equivalent to $|\mathcal{C}|$ single-spin or global spin flips.

To assess sampling performance, it is important to quantify correlations between successive samples of the Markov chain. Even after the chain has reached equilibrium, consecutive configurations can remain correlated, which reduces the number of effectively independent samples. This behavior is characterized by the normalized autocorrelation function of an observable $f$, defined as
\begin{equation}
a_f(l) :=
\frac{
\mathbb{E}\!\left[(f(s_k) - \mathbb{E}[f])(f(s_{k+l}) - \mathbb{E}[f])\right]
}{
\mathbb{E}\!\left[(f(s_k) - \mathbb{E}[f])^2\right]
},
\end{equation}
where $l$ denotes the lag, i.e., the number of steps separating the two samples. A rapid decay of $a_f(l)$ indicates that the Markov chain quickly loses memory of previous states and therefore mixes efficiently.

In \Cref{fig:mcmc_ACF}, we show the autocorrelation function $a$ as a function of the lag $l$. We compare the cluster-based MCMC algorithm guided by correlations derived from QAOA at depths $p=1$ and $p=3$ to baseline MCMC algorithms employing only single-spin flips (SSFs) or uniform spin flips (USFs), i.e., updates in which a subset of spins is selected uniformly at random and flipped. In both baseline variants, global spin flips are additionally performed with a probability of $5\,\%$. To comprehensively assess sampling performance, we analyze both the energy and the magnetization. Since global spin flips cause the autocorrelation function of the magnetization to decay immediately, we instead consider the autocorrelation of the absolute magnetization, which provides a more meaningful measure of mixing behavior.

\begin{figure*}[htbp]
    \centering
    \subfloat[Degree 3, absolute magnetization]{%
    \input{mcmc_deg_3_type_MAG.pgf}%
        \label{fig:mcmc_deg_3_type_MAG}%
    }
    \subfloat[Degree 3, energy]{%
    \input{mcmc_deg_3_type_ENE.pgf}%
        \label{fig:mcmc_deg_3_type_ENE}%
    }
    \par\medskip
    \subfloat[Degree 27, absolute magnetization]{%
    \input{mcmc_deg_27_type_MAG.pgf}%
        \label{fig:mcmc_deg_27_type_MAG}%
    }
    \subfloat[Degree 27, energy]{%
    \input{mcmc_deg_27_type_ENE.pgf}%
        \label{fig:mcmc_deg_27_type_ENE}%
    }
    \caption{The autocorrelation functions are shown as a function of the lag for different MCMC variants. In all panels, results for the CA guided by correlations derived from QAOA at depths $p=1$ and $p=3$ are presented alongside the single-spin-flip (SSF) and uniform-spin-flip (USF) variants. Panels (a) and (b) display the autocorrelation of the absolute magnetization $|m|$ and the energy $E$, respectively, for three-regular graphs, while panels (c) and (d) show the corresponding quantities for 27-regular graphs.}
    \label{fig:mcmc_ACF}
\end{figure*}

In all experiments, the scaling factor is set to $\ell_{\text{scale}} = 1$. For each algorithmic variant, ten random graph instances of size $n=28$ are generated and each instance is evaluated ten times. We consider both three-regular and 27-regular graphs. The inverse temperature is chosen as $\beta = 3$ for three-regular graphs and $\beta = 1$ for 27-regular graphs. To reduce the influence of the initial configuration, the first $10{,}000$ samples are discarded as burn-in. After this equilibration phase, $3 \times 10^6$ MC sweeps (corresponding to $3 \times 10^6 n$ update steps) are performed for each run.

For both graph classes, the quantum-informed cluster MCMC algorithm exhibits a significantly faster decay of correlations compared to the single-spin and uniform-spin flip baselines. Increasing the quality of the quantum correlations from QAOA depth $p=1$ to $p=3$ leads to only minor additional improvements for three-regular graphs. In contrast, for higher-degree graphs, improved quantum correlations yield substantially stronger gains. This difference can be attributed to the higher level of frustration and the resulting increased difficulty of sampling in these instances.

\begin{sloppypar}
Additional results, including post-burn-in acceptance probabilities and magnetization histograms of the quantum-informed MCMC algorithm, are provided in Appendix~\ref{apx:details_and_magnetizations_for_the_numerical_results_of_the_quantum_informed_mcmc_algorithm}. For low-degree graphs, the magnetization distributions are nearly symmetric around zero, whereas for degree-$27$ graphs a slight asymmetry is observed. In an ideal MCMC sampler, the magnetization histogram should be perfectly symmetric due to the $\mathbb{Z}_2$ symmetry of the problem and the invariance of the Boltzmann distribution under global spin inversion. The observed asymmetry therefore indicates that certain regions of the configuration space are explored less efficiently, suggesting the presence of metastable subspaces or slow mixing modes that remain challenging even for the cluster-based updates. Further improvements may be achieved by incorporating ideas from generalized Swendsen-Wang-type methods~\cite{Barbu2005,Barbu2020}, which allow the construction of cluster proposals based on problem-dependent information rather than coupling constants while preserving the desired stationary distribution.
\end{sloppypar}

\clearpage

\section{Conclusion and Future Work}
\label{sec:conclusion_and_future_work}

In this work, we investigated correlation-guided CAs for combinatorial optimization and sampling, extending previous approaches~\cite{Eder2025-Cluster,Eder2026} by incorporating NNN information and formulating a corresponding MCMC variant. Our results again show that correlation-informed cluster updates can significantly improve performance over local methods, particularly in highly frustrated problems where simple topological information becomes insufficient.

A key finding is the impact of incorporating NNN information into the cluster construction. Across all considered correlation sources (thermal, coupling constants, QAOA, and SDP), we observed that including NNNs consistently improves the performance of the CA itself, even when guided only by coupling constants. For thermal correlations, we saw that the gains in terms of TTS are comparatively moderate, whereas for non-optimized iteration counts the improvements can be more pronounced. At the same time, the stronger baseline performance of the NNN-enhanced CA implies that higher levels of frustration are required for additional correlation information to yield noticeable improvements over the coupling-constant-guided version. We further demonstrated that the CA performs particularly well on non-degenerate instances, such as the Chook benchmarks, where the absence of competing ground states leads to highly informative correlations and rapid convergence.

Finally, we introduced an MCMC extension of the CA targeting a Boltzmann distribution, thereby extending the method beyond pure optimization. While this significantly broadens its applicability, the mixing behavior and robustness of the resulting algorithm are not yet fully understood and require further investigation.

Future work could therefore focus on improving the MCMC variant. In this context, it is natural to draw inspiration from generalized Swendsen-Wang-type algorithms~\cite{Barbu2005,Barbu2020}. Adapting such constructions to correlation-guided settings remains an open problem. In addition, a systematic scaling analysis of the MCMC variant is required to assess its practical relevance. On the quantum side, evaluating the CAs on larger hardware will be essential. More generally, identifying correlation sources that provide sufficiently accurate guidance at low computational cost remains a key challenge for making the method competitive in practice.

\begin{acknowledgments}
The authors thank Aron Kerschbaumer, Jernej Rudi Fin\v{z}gar, Raimel A. Medina, Martin J. A. Schuetz, Helmut G. Katzgraber, and Christian B. Mendl for insightful discussions. During the preparation of this work the authors used ChatGPT from OpenAI based on the GPT-5 architecture, in order to improve the readability and language of the manuscript. After using this tool, the authors reviewed and edited the content as needed and take full responsibility for the content of the published article.
\end{acknowledgments}

\appendix
\clearpage
\onecolumngrid

\section{Alternative Cost Calculations for Next-Nearest Neighbors}
\label{apx:alternative_cost_calculations_for_nnn}

In this section, we analyze the computational cost of cluster updates when NNN interactions are taken into account, as introduced in \Cref{sec:extension_to_next_nearest_neighbors}. The objective is to relate the cost of flipping a cluster of size \(k\) to that of SA, by interpreting a cluster update as effectively equivalent to \(k\) single-spin updates. To this end, we examine how the computational effort of a cluster flip scales with \(k\).

A cluster update consists of three main components: (i) attempting to grow the cluster by adding spins, (ii) updating the graph and auxiliary quantities when a spin is accepted, and (iii) computing the energy change associated with flipping the final cluster. As an initialization step, the algorithm computes an energy-density-like vector
\begin{equation}
\tilde{\rho}_0(i)=\sum_{j\in\mathcal{N}(i)}J_{ij}x_i^0 x_j^0,\qquad \forall i\in V,
\end{equation}
which is evaluated once at the beginning and can therefore be neglected in the scaling analysis. However, it must be updated whenever a spin is added to the cluster.

We first consider the cost of (i), i.e., evaluating whether candidate spins should be included in the cluster. In the worst case of a fully connected graph, the number of such attempts scales as \(\mathcal{O}(kn)\), since up to \(\mathcal{O}(n)\) spins may be tested for each of the \(k\) accepted spins. This corresponds to a scenario in which, at each step, all remaining spins are rejected before one is added. Each individual evaluation requires only constant time, as the link probability \(\tilde{p}_{\text{link}}\) is computed using a single entry of \(\tilde{\rho}_l(i)\), resulting in an \(\mathcal{O}(1)\) operation.

Next, we analyze the cost of (ii), i.e., adding a spin to the cluster. This requires updating both the effective graph representation and the vector \(\tilde{\rho}\). In a fully connected graph, up to \(\mathcal{O}(n)\) edges are affected by such an update, leading to an \(\mathcal{O}(n)\) cost per accepted spin. For a cluster of size \(k\), this results in a total cost of \(\mathcal{O}(kn)\).

Finally, the cost of (iii), computing the energy difference for the completed cluster, also scales as \(\mathcal{O}(kn)\) in the worst case, since the cluster boundary may involve up to \(\mathcal{O}(kn)\) edges. Combining these contributions, the overall complexity of constructing and flipping a cluster of size \(k\) scales as \(\mathcal{O}(kn)\), up to constant factors.

For comparison, SA performs single-spin updates (\(k=1\)), resulting in an \(\mathcal{O}(n)\) cost per update in a fully connected graph. Therefore, from a scaling perspective, a cluster flip of size \(k\) in the CA is computationally equivalent to performing \(k\) single-spin updates in SA.

\section{Pareto Analysis}
\label{apx:pareto_analysis}

In this section, we perform a Pareto analysis of the CA guided by correlations derived from QAOA. The color scale used in \Cref{fig:apx_pareto_qaoa_analysis} is explained in \Cref{tab:apx_clog_color_scale}.

\begin{table}[h!]
\centering
\begin{tabular}{|c|c|c|}
\hline
\textbf{Percentage Found (\%)} & \textbf{$100 - \text{Percentage}$ (\%)} & \textbf{$\log_{10}(100 - \text{Percentage})$} \\
\hline
100.0 & $\sim$0     & $-\infty$ (clipped) \\
99.9  & 0.1         & $-1$ \\
99.0  & 1           & $0$ \\
90.0  & 10          & $1$ \\
50.0  & 50          & $\approx 1.70$ \\
10.0  & 90          & $\approx 1.95$ \\
0.0   & 100         & $2$ \\
\hline
\end{tabular}
\caption[Table explaining the Pareto analysis for the quantum-guided CA]{Mapping of optimal solution percentages to color values using $\log_{10}(100 - \text{Percentage})$ for \Cref{fig:apx_pareto_qaoa_analysis}.}
\label{tab:apx_clog_color_scale}
\end{table}

For degree $10$, achieving a solution quality of $99.9\%$ requires approximately $125n$ iterations and is largely independent of the QAOA depth. In contrast, for degree $27$, the number of required iterations to reach $99\%$ quality decreases substantially, from about $2000n$ at depth $3$ to roughly $750n$ at depth $10$. This demonstrates that for higher-degree instances, improved correlations have a stronger impact on performance, which can be attributed to the increased level of frustration.

\begin{figure*}[htbp]
    \centering
    \subfloat[Degree 10]{%
        \input{QAOA_Pareto_degree_10.pgf}%
        \label{fig:QAOA_Pareto_degree_10}%
    }\\[1ex]
    \subfloat[Degree 20]{%
        \input{QAOA_Pareto_degree_20.pgf}%
        \label{fig:QAOA_Pareto_degree_20}%
    }\\[1ex]
    \subfloat[Degree 27]{
    \input{QAOA_Pareto_degree_27.pgf}%
        \label{fig:QAOA_Pareto_degree_27}%
    }
    \caption{Lines of constant solution quality (i.e., Pareto analyses) are shown for random regular-degree graphs of size $n=28$. The color scale is explained in \Cref{tab:apx_clog_color_scale}. The CA is run considering next-nearest neighbor (NNN) interactions and is optimized over $e \in \{0.5, 0.6, \dots, 1.0\}$ and $\ell_{\text{scale}} \in \{1.0, 3.0, 6.0, 10.0, 15.0, 20.0, 50.0\}$. Each bin is evaluated 50 times for every parameter combination to ensure statistical consistency.}
    \label{fig:apx_pareto_qaoa_analysis}
\end{figure*}

\clearpage

\section{Analysis of Thermal Samples for Chook}
\label{apx:analysis_of_thermal_samples_for_chook}

\begin{figure*}[h!]
    \centering
    \subfloat[Thermal sample analysis]{%
    \input{Chook_C_1_samples.pgf}%
        \label{fig:Chook_C_1_samples}%
    }
    \subfloat[Hamming distance vs energy]{%
    \input{Chook_C_1_hds_energies.pgf}%
        \label{fig:Chook_C_1_hds_energies}%
    }
    \caption{The thermal samples used in \Cref{sec:non_degenerate_chook_problems} are grouped into bins according to their approximation ratios, and the resulting distributions are shown in (a) with a display threshold of 0.01. The presence of optimal solutions in the samples is discussed in the main text in \Cref{sec:non_degenerate_chook_problems}. In (b), the samples are further analyzed by examining how their Hamming distances to the unique optimal solution (modulo the $\mathbb{Z}_2$ symmetry) relate to the solution quality. While the mean Hamming distance decreases monotonically, the standard deviations exhibit significant overlap, indicating that these instances remain challenging for MC-based methods.}
    \label{fig:Chook_C_1_samples_and_problem}
\end{figure*}

\section{Efficient Calculation of the Proposal Ratios for the Quantum-Informed MCMC Algorithm (\Cref{thm:detailed_balance_mcmc})}
\label{apx:efficient_calculation_of_the_proposal_ratios_for_the_quantum_informed_mcmc_algorithm}

\begin{restatement}
Let $G=(V,\mathcal{E})$ be an undirected graph with $n$ nodes, and let $\mathcal{C}_{x,x'}\subseteq V$ denote the set of vertices whose simultaneous flip maps a configuration $x\in\{-1,1\}^n$ to $x'$. Define
\[
\mathcal{E}_{x,x'} :=
\bigcup_{\substack{
v_i \in \mathcal{C}_{x,x'} \\
v_j \notin \mathcal{C}_{x,x'}
}}
\{v_i,v_j\}
\subseteq \mathcal{E}
\]
as the set of edges crossing the boundary of the cluster $\mathcal{C}_{x,x'}$.  
The proposal distribution of \Cref{alg:mcmc}, restricted to nearest-neighbor (NN) interactions and without a shrinking step (i.e., when a node is added to the cluster, all edges remain untouched except those fully contained in the cluster), satisfies
\[
\frac{q(x'\mid x)}{q(x\mid x')}
=
\prod_{\{i,j\}\in \mathcal{E}_{x,x'}}
\frac{1-p^{(i,j)}_{\mathrm{link,MCMC}}(x)}
     {1-p^{(i,j)}_{\mathrm{link,MCMC}}(x')}.
\]
\end{restatement}

\begin{proof}
Let $P_I(x)$ denote the probability that, for a subgraph $G_I\subseteq G$ containing exactly the vertices in $\mathcal{C}_{x,x'}$, the cluster $\mathcal{C}_{x,x'}$ is constructed.  
Then the proposal probability factorizes as
\[
q(x'\mid x)
=
P_I(x)\,
\prod_{\{i,j\}\in \mathcal{E}_{x,x'}}
\bigl(1-p^{(i,j)}_{\mathrm{link,MCMC}}(x)\bigr).
\]
This follows because, in the absence of a shrinking step, the link probability $p^{(i,j)}_{\text{link,MCMC}}(x)$ is independent of the cluster building order for all edges $\{v_i,v_j\}\in \mathcal{E}_{x,x'}$. Moreover, in order to construct the cluster $\mathcal{C}_{x,x'}$, all such boundary edges must always be rejected.

Finally, the inclusion probabilities satisfy $P_I(x)=P_I(x')$, since $P_I$ consists only of factors of $p^{(i,j)}_{\mathrm{link,MCMC}}$, each depending on the product $x_i x_j$. Because $x$ and $x'$ differ only by a global sign flip on the vertices in $\mathcal{C}_{x,x'}$, we have $x_i x_j = x'_i x'_j$ for all relevant edges. Hence, the $P_I$ factors cancel in the ratio $\frac{q(x'\mid x)}{q(x\mid x')}$, which proves the claim.
\end{proof}

\section{Details and Magnetizations for the Numerical Results of the Quantum-Informed MCMC Algorithm}
\label{apx:details_and_magnetizations_for_the_numerical_results_of_the_quantum_informed_mcmc_algorithm}

\begin{table}[h!]
\centering
\begin{tabular}{lcc}
\toprule
\textbf{Method} & \textbf{Degree 3} & \textbf{Degree 27} \\
\midrule
SSF            & $5.15 \pm 0.02$ & $6.99 \pm 0.66$ \\
USF            & $5.03 \pm 0.02$ & $5.17 \pm 0.07$ \\
QGCA ($p{=}1$) & $5.79 \pm 0.55$ & $7.21 \pm 0.80$ \\
QGCA ($p{=}3$) & $6.07 \pm 0.78$ & $8.00 \pm 1.01$ \\
\bottomrule
\end{tabular}
\caption{Post-burn-in acceptance probabilities (\%) for different update methods of the MCMC algorithm analyzed in \Cref{sec:quantum_informed_markov_chain_monte_carlo} are shown. Single-spin flips (SSFs) and uniform-spin flips (USFs) are presented alongside QGCA flips, where QGCA denotes the CA guided by correlations derived from QAOA with respective depths $p$. The inverse temperatures are set to $\beta=3$ and $\beta=1$ for degrees 3 and 27, respectively.}
\label{tab:mcmc_acc_probs_post_percent}
\end{table}

\clearpage

\newgeometry{left=2cm,right=2cm,top=2cm,bottom=3cm}

\begin{figure*}[htbp]
    \centering
    \subfloat[Quantum-informed MCMC for degree 3]{%
    \input{mcmc_magnetization_degree_3.pgf}%
        \label{fig:mcmc_magnetization_degree_3}%
    }
    \subfloat[Quantum-informed MCMC with global spin flips for degree 3]{%
    \input{mcmc_magnetization_degree_3_mix.pgf}%
        \label{fig:mcmc_magnetization_degree_3_mix}%
    }
    \par\medskip
    \subfloat[Quantum-informed MCMC for degree 27]{%
    \input{mcmc_magnetization_degree_27.pgf}%
        \label{fig:mcmc_magnetization_degree_27}%
    }
    \subfloat[Quantum-informed MCMC with global spin flips for degree 27]{%
    \input{mcmc_magnetization_degree_27_mix.pgf}%
        \label{fig:mcmc_magnetization_degree_27_mix}%
    }
    \par\medskip
    \subfloat[SSF MCMC for degree 27]{%
    \input{mcmc_magnetization_degree_27_SF.pgf}%
        \label{fig:mcmc_magnetization_degree_27_SF}%
    }
    \subfloat[SSF MCMC with global spin flips for degree 27]{%
    \input{mcmc_magnetization_degree_27_SF_mix.pgf}%
        \label{fig:mcmc_magnetization_degree_27_SF_mix}%
    }
    \caption{The post-burn-in magnetization histograms are shown for the quantum-informed MCMC algorithm in (a)–(d) and for the single-spin-flip (SSF) MCMC in (e) and (f). Due to the $\mathbb{Z}_2$ symmetry of the instances, the distributions are expected to be symmetric around zero magnetization. For degree-3 graphs, the inclusion of global spin-flip moves significantly improves this symmetry (cf.\ (a) and (b)). A similar effect is observed for degree-27 graphs; however, a slight asymmetry remains (cf.\ (c) and (d)), reflecting the increased sampling difficulty of these instances. A comparable behavior is observed for the SSF algorithm in (e) and (f), whose magnetization distributions closely resemble those obtained with the quantum-informed MCMC for degree-27 graphs.}
    \label{fig:mcmc_magnetization}
\end{figure*}

\clearpage

\bibliography{main}

\end{document}